\documentclass[11pt]{article}

\usepackage[T1]{fontenc}
\usepackage[utf8]{inputenc}
\usepackage[margin=1in]{geometry}
\usepackage{amsmath,amssymb,amsfonts,amsthm}
\usepackage{aliascnt}
\usepackage{authblk}
\usepackage{mathtools}
\usepackage{hyperref}
\usepackage[capitalize,noabbrev]{cleveref}

\hypersetup{
  pdftitle={A simple (2+ε)-approximation for knapsack interdiction},
  pdfauthor={Noah Weninger},
  pdfkeywords={interdiction, knapsack problem, approximation algorithms, bilevel optimization, dynamic programming}
}

\theoremstyle{plain}
\newtheorem{theorem}{Theorem}
\newaliascnt{lemma}{theorem}
\newtheorem{lemma}[lemma]{Lemma}
\aliascntresetthe{lemma}
\newaliascnt{corollary}{theorem}
\newtheorem{corollary}[corollary]{Corollary}
\aliascntresetthe{corollary}
\newaliascnt{proposition}{theorem}
\newtheorem{proposition}[proposition]{Proposition}
\aliascntresetthe{proposition}

\def\OPT{\operatorname{OPT}_{\mathrm{F}}}
\def\Z{\mathbb{Z}}
\def\R{\mathbb{R}}

\def\NP{\textrm{NP}}
\def\P{\textrm{P}}
\def\F{F}
\def\FIP{K}
\def\OPTIP{\operatorname{OPT}_{\mathrm{I}}}

\title{A simple \texorpdfstring{$(2+\epsilon)$}{(2+ε)}-approximation for knapsack interdiction}

\author{Noah Weninger}
\affil{Dept.~of Combinatorics and Optimization, University of Waterloo, Canada\\
nweninger@uwaterloo.ca}
\date{\today}

\begin{document}

\maketitle

\begin{abstract}
In the knapsack interdiction problem, there are $n$ items, each with a non-negative profit, interdiction cost, and packing weight.
There is also an interdiction budget and a capacity.
The objective is to select a set of items to interdict (delete) subject to the budget
which minimizes the maximum profit attainable by packing the remaining items subject to the capacity.
We present a $(2+\epsilon)$-approximation running in
$O(n^3\epsilon^{-1}\log(\epsilon^{-1}\log\sum_i p_i))$ time.
Although a polynomial-time approximation scheme (PTAS) is already known for this problem,
our algorithm is considerably simpler and faster.
The approach also generalizes naturally to a $(1+t+\epsilon)$-approximation
for $t$-dimensional knapsack interdiction with running time $O(n^{t+2}\epsilon^{-1}\log(\epsilon^{-1}\log\sum_i p_i))$.
\end{abstract}

\section{Introduction}

In an {\em interdiction problem}, the objective is to interdict (i.e., delete) a subset of resources from
a nominal combinatorial optimization problem, subject to an interdiction budget,
in order to maximally degrade the optimal objective value of the nominal problem.
Interdiction has been studied extensively for
knapsacks~\cite{denegre2011interdiction,weninger2025fast,chen2022approximation,caprara2014bilevel,caprara2016bilevel,della2020exact},
network flows~\cite{chestnut2017hardness,smith2020survey},
shortest paths~\cite{israeli2002shortest,khachiyan2008short},
matching~\cite{zenklusen2010matching},
matroids~\cite{linhares2017improved,weninger2025interdiction},
and cliques~\cite{furini2019maximum},
among many others.
Knapsack interdiction is known to be $\Sigma_2^p$-hard~\cite{caprara2014bilevel}.
The same also holds for many other interdiction problems over NP-hard nominal problems~\cite{grune2025complexity}.
These problems are natural models for defence, resource allocation, and competitive decision-making.

In recent years, interdiction problems have often been viewed as a special case of {\em bilevel optimization} (see e.g.~\cite{dempe2002foundations} for a general introduction).
In this context, the knapsack interdiction problem we consider is commonly known
in the literature as the ``bilevel knapsack problem with interdiction constraints''~\cite{weninger2025fast,caprara2016bilevel,della2020exact}.
Although there are many other problem variants which could perhaps be described by the name ``knapsack interdiction'',
we use this name for brevity, as it is the most well-studied variant.

The bilevel optimization model presents a definitional issue for the design of polynomial time
approximation algorithms for $\Sigma_2^p$-hard interdiction problems.
In the bilevel model of an interdiction problem,
feasibility of a solution $(x,y)$ requires that $x$ is a feasible interdiction set and
$y$ is {\em optimal} for the nominal problem induced by $x$
(i.e., the nominal problem but with resources in $x$ deleted). Hence, for an NP-hard nominal problem,
unless $\P=\NP$,
it is generally not possible to even produce a bilevel {\em feasible} solution in polynomial time,
let alone one with a bounded approximation ratio.
To deal with this, we make an assumption also taken in prior works on approximating
knapsack interdiction~\cite{chen2022approximation,caprara2014bilevel}:
that it is sufficient to find an interdiction set $x$ such that {\em there exists} an optimal solution
$y$ to the nominal problem induced by $x$ which achieves the desired approximation ratio.

After introducing the problem and related work,
in \cref{sec:pseudo} we apply linear programming (LP) duality and dynamic programming (DP) to obtain a pseudopolynomial-time 2-approximation.
\cref{sec:fptas} modifies this algorithm to obtain a polynomial-time $(2+\epsilon)$-approximation,
which naturally generalizes the textbook fully polynomial-time approximation scheme (FPTAS) for 0-1 knapsack~\cite{lawler1979fast,kellerer2004knapsack}.
Although a PTAS is already known for knapsack interdiction~\cite{chen2022approximation},
our algorithm is considerably simpler and faster, making it well-suited for practical use.
Finally, \cref{sec:multi} extends our result to a $(1+t+\epsilon)$-approximation for
$t$-dimensional knapsack interdiction.

\subsection{Problem statement}\label{sec:problem}
An instance of the knapsack interdiction problem consists of $n$ items.
Each item $i\in[n]$ has profit $p_i\in\Z_{\ge0}$,
cost $c_i\in\Z_{\ge0}$, and weight $w_i\in\Z_{\ge0}$.
There is a budget $B\in\Z_{\ge0}$ and a capacity $C\in\Z_{\ge0}$.
The problem is to find
\[
\OPTIP=\min\{\FIP(x):x\in\{0,1\}^n,c^\top x\le B\}
\]
where $\FIP(x)=\max\{p^\top y:y\in\{0,1\}^n,w^\top y\le C,y\le\mathbf{1}-x\}$ and $\mathbf{1}$ is the $n$-dimensional all-ones vector.
We say that item $i$ is \emph{interdicted} in a solution $x$ if $x_i=1$;
in this case, the interdiction constraint $y_i\le1-x_i$ forces $y_i=0$.

\subsection{Prior work}
Knapsack interdiction was introduced by DeNegre in 2011,
as a test case for a new branch-and-cut solver for bilevel mixed integer programming~\cite{denegre2011interdiction}.
A few years later, Caprara et al.~\cite{caprara2014bilevel} showed that knapsack interdiction
is $\Sigma_2^p$-complete and NP-complete even in unary encoding, ruling out an FPTAS;
however, they gave a PTAS for the special subset-sum case, where $w_i=p_i$ for all $i$.
Chen et al.~\cite{chen2022approximation} later devised a PTAS for the general case.
Their result also extends to any constant number of budget and capacity constraints,
achieving a $(t+\epsilon)$-approximation for any $\epsilon>0$ when there are $t$ capacity constraints.
While this result settles the approximability of knapsack interdiction in a complexity-theoretic sense,
the running time is not explicitly bounded in their paper; our analysis
determined it to be $n^{\tilde O(t\epsilon^{-2-2t})}$, making it prohibitively slow for practical use.
To the best of our knowledge, these are the only two prior works on approximating this problem.

Significant progress has been made on exact algorithms since the first branch-and-cut solver,
which was only tested on instances with up to 15 items~\cite{denegre2011interdiction}.
Recent solvers quickly solve many instances with up to 500 items~\cite{weninger2025fast,della2020exact}.
However, some small pathological instances (e.g., subset-sum) still cannot be solved reliably,
motivating the desire for practical approximation algorithms.
Our approximation algorithm is inspired by techniques from both of these recent exact algorithms:
Della Croce and Scatamacchia's technique of guessing the split item~\cite{della2020exact}
is analogous to guessing $\alpha$ in our algorithm,
and the use of DP is shared with the algorithm of Weninger and Fukasawa~\cite{weninger2025fast}.

\section{A pseudopolynomial time 2-approximation}\label{sec:pseudo}

As a first step towards our main result, we present a pseudopolynomial-time 2-approximation.
Without loss of generality, we may assume $w_i\le C$ for all $i\in[n]$,
because all items violating this can be deleted without affecting the optimal solution.
With this assumption, we can obtain a 2-approximate upper bound $\OPT$ by relaxing $\FIP(x)$ from $\{0,1\}$ to $[0,1]$:
\[\OPT=\min\{\F(x):x\in\{0,1\}^n,c^\top x\le B\}\]
where
$\F(x)=\max\{p^\top y:y\in[0,1]^n,\,w^\top y\le C,\,y\le\mathbf{1}-x\}$.
For any fixed $x\in\{0,1\}^n$, $\F(x)$ is a fractional knapsack problem,
so $\FIP(x)\le\F(x)\le 2\FIP(x)$ since the LP relaxation of 0-1 knapsack
has an integrality gap of at most 2 under the assumption $w_i\le C$~\cite{kellerer2004knapsack}.
Consequently, $\OPTIP\le\OPT\le2\OPTIP$ by taking the minimum over all $x$.
Although the worst-case gap between $\OPT$ and $\OPTIP$ is a factor of 2,
\cref{prop:lp-gap} shows that this bound is much tighter for typical instances.

The main objective of the paper is to develop an FPTAS for $\OPT$ (\cref{thm:kp-fptas}).
We begin by showing that the 2-approximate relaxed problem $\OPT$
can be solved in pseudopolynomial time.
Since $\F(x)$ is a linear program for fixed $x$,
by strong duality we can replace the definition of $\F(x)$ with its dual:
\[\F(x)=\min\{\alpha C+\beta^\top(\mathbf{1}-x):\beta\ge p-\alpha w,\alpha\ge0,\beta\ge0\}.\]
Here, $\alpha$ and $\beta$ are the dual variables corresponding to the knapsack constraint $w^\top y\le C$
and interdiction constraints $y\le\mathbf{1}-x$, respectively.
For a given $\alpha$, the optimal $\beta$ is evident: take $\beta_i=\max(0,p_i-\alpha w_i)$;
then any $\beta'$ with $\beta'^\top(\mathbf{1}-x)<\beta^\top(\mathbf{1}-x)$ must violate either
$\beta'\ge0$ or $\beta'\ge p-\alpha w$.
Hence, we can rewrite $\F(x)$ with $\beta$ eliminated:
\[\F(x)=\min_{\alpha\ge0}\,\alpha C+\sum_{i\in[n]}(1-x_i)\max(0,p_i-\alpha w_i).\]
Define $\F_\alpha(x)=\sum_{i\in[n]}(1-x_i)\max(0,p_i-\alpha w_i)$.
Plugging this into $\OPT$ gives:
\[\OPT=\min\left\{\alpha C+\F_\alpha(x):
  c^\top x\le B,\,x\in\{0,1\}^n,\alpha\ge0\right\}.\]
Although $\alpha$ is a continuous variable here,
we can restrict our attention to a set of $O(n)$ choices for $\alpha$.
Let $g(\alpha)=\alpha C+\min\{\F_\alpha(x):c^\top x\le B,x\in\{0,1\}^n\}$,
so that $\OPT=\min_{\alpha\ge0}g(\alpha)$.
By \cref{lem:lagrangian}, the minimum is achieved at some $\alpha\in\{0\}\cup\{p_i/w_i:i\in[n]\}$
(with the convention $p_i/w_i=\infty$ when $w_i=0$):
\begin{lemma}
\label{lem:lagrangian}
For any $\alpha\ge0$ and $x\in\{0,1\}^n$, $\F(x)\le \alpha C+\F_\alpha(x)$.
For each $x$, equality is achieved by some $\alpha\in\{0\}\cup\{p_i/w_i:i\in[n]\}$.
\end{lemma}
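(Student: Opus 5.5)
The first claim is weak duality. I will show it directly rather than rely on the strong-duality derivation above. Fix $x$ and $\alpha\ge0$, and let $y$ be any feasible point of the primal LP defining $\F(x)$. Then $y\in[0,1]^n$, $w^\top y\le C$ and $y\le\mathbf{1}-x$, and
\[
p^\top y=\alpha w^\top y+\sum_i (p_i-\alpha w_i)y_i\le \alpha C+\sum_i \max(0,p_i-\alpha w_i)\,y_i\le \alpha C+\sum_i (1-x_i)\max(0,p_i-\alpha w_i).
\]
The first inequality uses $\alpha\ge0$ together with $w^\top y\le C$ and $y\ge0$. The second uses $0\le y_i\le 1-x_i$ and $\max(0,\cdot)\ge0$. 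Taking the maximum over $y$ gives $\F(x)\le\alpha C+\F_\alpha(x)$.

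For the equality claim, define $h(\alpha)=\alpha C+\F_\alpha(x)$ for $\alpha\ge0$. This is a sum of the linear term $\alpha C$ and the terms $(1-x_i)\max(0,p_i-\alpha w_i)$. Each such term is convex and piecewise linear in $\alpha$, with its only breakpoint at $\alpha=p_i/w_i$; if $w_i=0$ the term is constant. So $h$ is convex and piecewise linear on $[0,\infty)$, and its breakpoints lie in $\{p_i/w_i\}$. The function $h$ is bounded below by $0$. Hence either $h$ is nondecreasing on $[0,\infty)$, so its minimum is at $\alpha=0$, or its minimum is attained at a breakpoint. The only other possibility is that $h$ is constant on an unbounded final piece, but then the left endpoint of that piece, which is a breakpoint or $0$, also attains the minimum. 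Therefore $\min_{\alpha\ge0}h(\alpha)$ is attained at some $\alpha\in\{0\}\cup\{p_i/w_i:i\in[n]\}$.

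It remains to show $\min_\alpha h(\alpha)=\F(x)$. This is exactly the strong-duality derivation already done above: the LP dual of $\F(x)$ equals $\min_{\alpha\ge0} h(\alpha)$ after eliminating $\beta$. The primal is feasible ($y=0$) and bounded, so strong duality applies.

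The step needing the most care is the minimizer location, specifically the edge cases: the slope of $h$ on the final unbounded piece could be negative, and terms with $w_i=0$ produce the value $\infty$. For the first, once $\alpha$ exceeds every finite $p_i/w_i$, every term with $w_i>0$ has become $0$, so the slope of $h$ is $C\ge0$; thus the final piece is nondecreasing and $h$ is not unbounded below. For the second, the convention $p_i/w_i=\infty$ is never needed as a minimizer, since the corresponding terms are constant in $\alpha$.
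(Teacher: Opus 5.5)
Your proof is correct and follows the same route as the paper: the inequality is (weak) duality, and the location of the minimizer comes from piecewise linearity of $\alpha C+\F_\alpha(x)$ with breakpoints in $\{p_i/w_i\}$, while $\min_{\alpha\ge0}(\alpha C+\F_\alpha(x))=\F(x)$ comes from the strong-duality derivation preceding the lemma. Your direct verification of weak duality and your observation that the final unbounded piece has slope $C\ge0$ make explicit two details the paper's short proof leaves implicit; the second is what guarantees that the minimum over $[0,\infty)$ is attained at $0$ or at a breakpoint.
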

\begin{proof}
Since $\F(x)=\min_{\alpha\ge0}\alpha C+\F_\alpha(x)$,
we have $\F(x)\le\alpha C+\F_\alpha(x)$ for any $\alpha\ge0$.
For each $i$, $\max(0,p_i-\alpha w_i)$ is linear in $\alpha$ except at $\alpha=p_i/w_i$,
so $\F_\alpha(x)$ is linear in $\alpha$ on every interval not containing $p_i/w_i$ for any $i$.
Hence, $\alpha C+\F_\alpha(x)$ is piecewise linear in $\alpha$ with breakpoints
in $\{p_i/w_i:i\in[n]\}$.
Therefore, the minimum is achieved at some $\alpha \in \{0\}\cup\{p_i/w_i:i\in[n]\}$.
\end{proof}
This reduces computing $\OPT$ to solving $O(n)$ integer programs $g(\alpha)$; in particular, $\OPT$ is in NP.
In contrast, $\OPTIP$ is not in NP unless the polynomial hierarchy collapses~\cite{caprara2014bilevel}.
We now obtain a pseudopolynomial-time 2-approximation for $\OPTIP$ via the following corollary.
\begin{corollary}
\label{cor:2approx}
There is a pseudopolynomial-time algorithm for $\OPT$.
\end{corollary}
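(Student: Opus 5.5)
By \cref{lem:lagrangian}, $\OPT=\min_{\alpha\in A} g(\alpha)$ with $A=\{0\}\cup\{p_i/w_i:i\in[n]\}$, so it suffices to evaluate $g(\alpha)$ exactly for each of the at most $n+1$ candidates in $A$ and return the smallest value, together with the interdiction set $x$ that attains it. For a fixed $\alpha$, write $v_i=\max(0,p_i-\alpha w_i)$. Then
\[
\min\{\F_\alpha(x):c^\top x\le B,\,x\in\{0,1\}^n\}=\sum_{i\in[n]}v_i-\max\Bigl\{\sum_i v_ix_i:c^\top x\le B,\,x\in\{0,1\}^n\Bigr\}.
\]
The second term is an ordinary 0-1 knapsack problem with values $v_i$, weights $c_i$ and capacity $B$. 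So each $g(\alpha)$ reduces to one knapsack instance.

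For $\alpha\in A$ finite (including $\alpha=0$), I would solve this knapsack by the standard DP indexed by budget. Let $D[j][b]$ be the maximum of $\sum_{i\le j}v_ix_i$ over $x$ on the first $j$ items with $\sum_{i\le j}c_ix_i\le b$. This takes $O(nB)$ arithmetic operations and is pseudopolynomial since $B$ is part of the integer input. A DP over costs is preferable to one over values here because the $v_i$ may be non-integral rationals when $\alpha=p_k/w_k$. If one insists on integer values, multiplying through by $w_k$ makes every $v_i$ an integer bounded by $w_k p_i\le C\max_i p_i$, which is still pseudopolynomial. Evaluating $\alpha C+\sum_i v_i-D[n][B]$ then gives $g(\alpha)$. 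The total cost is $O(n^2B)$ time, and the argmin over $\alpha$ together with the DP traceback yields an optimal $x$.

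The main subtlety is the degenerate candidate $\alpha=\infty$, which arises when some $w_i=0$. I would handle it by noting that $g$ is only evaluated at finite $\alpha$. Because $\alpha C+\F_\alpha(x)$ is convex and piecewise linear in $\alpha$ (it is a sum of convex functions), for fixed $x$ its minimum over $\alpha\ge0$ lies at $0$ or at a finite breakpoint $p_i/w_i$ with $w_i>0$. The only exception is when it is nonincreasing forever, which requires $C=0$. In that case I would add a sufficiently large finite $\alpha$, for example $\max_{i:w_i>0}p_i/w_i$, to $A$.

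A secondary point is rigor about the value returned. The minimizing pair $(\alpha^*,x^*)$ gives $\alpha^* C+\F_{\alpha^*}(x^*)=\OPT$. By \cref{lem:lagrangian}, $\F(x^*)\le\OPT$, and also $\F(x^*)\ge\OPT$ because $x^*$ is feasible. So $x^*$ is optimal for $\OPT$, and hence, as discussed above, it is 2-approximate for $\OPTIP$.
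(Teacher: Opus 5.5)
Your proposal is correct and follows the paper's proof. Like the paper, you reduce each $g(\alpha)$ to a 0-1 knapsack with profits $\max(0,p_i-\alpha w_i)$, costs $c_i$ and budget $B$, solve it with the pseudopolynomial DP, and take the minimum over the $O(n)$ candidates from \cref{lem:lagrangian}. Your extra remarks are sound refinements that the paper leaves implicit: indexing the DP by budget to handle rational profits, avoiding $\alpha=\infty$, and showing that the returned $x^*$ is optimal. One small redundancy: the value $\max_{i:w_i>0}p_i/w_i$ you add for $C=0$ is already in $A$.
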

\begin{proof}
First, we show there is a pseudopolynomial-time algorithm for $g(\alpha)$ for any given $\alpha$.
Indeed, consider the $n$-item 0-1 knapsack problem with costs $c$, budget $B$, and
profits $p'_i=\max(0,p_i-\alpha w_i)$ for each $i\in[n]$.
Suppose the optimal objective value for this knapsack problem is $z$;
then $g(\alpha)=\alpha C+\sum_{i\in[n]}p_i'-z$ because $\F_\alpha(x)$ counts only
the $p'$ values of the items which are {\em not} selected in $x$.
This knapsack problem can be solved in pseudopolynomial time with DP~\cite{kellerer2004knapsack}.
By \cref{lem:lagrangian}, it suffices to test a polynomial number of $\alpha$ values
to compute $\OPT$ from $g(\alpha)$.
\end{proof}
$\OPT$ also has the following additive approximation guarantee,
which is often much better than the multiplicative factor of 2.
\begin{proposition}\label{prop:lp-gap}
$\OPTIP\ge\OPT-p^*$, where
$p^*=\min_{x}\{\max_{i}(1-x_i)p_i:\FIP(x)=\OPTIP\}$.
\end{proposition}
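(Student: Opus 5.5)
Fix an interdiction vector $x^*$ attaining the minimum in the definition of $p^*$. Then $x^*$ is budget-feasible, $\FIP(x^*)=\OPTIP$, and every non-interdicted item $i$ (those with $x^*_i=0$) has $p_i\le p^*$. The plan is to bound the fractional value $\F(x^*)$ above by $\FIP(x^*)+p^*$. Since $\OPT\le\F(x^*)$ by the definition of $\OPT$ as a minimum over budget-feasible $x$, this gives $\OPT\le\OPTIP+p^*$, which is the claim.

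The key step is the standard structure of an optimal fractional knapsack solution. For fixed $x^*$, $\F(x^*)$ is a fractional knapsack over the surviving items $\{i:x^*_i=0\}$. An optimal solution $y$ is obtained greedily: sort the surviving items by nonincreasing $p_i/w_i$, with zero-weight items first, and fill until the capacity is reached. The result has at most one fractional coordinate, the split item $s$. Let $S$ be the set of items with $y_i=1$.

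Then $S$ is a feasible integral packing, because $w(S)\le w^\top y\le C$ and $S$ avoids interdicted items. Hence $p(S)\le\FIP(x^*)$, and
\[\F(x^*)=p(S)+y_sp_s\le\FIP(x^*)+p_s\le\FIP(x^*)+p^*,\]
using $y_s\le1$ and $p_s\le p^*$ because $s$ is not interdicted. If there is no fractional item, the bound holds trivially with $\F(x^*)=p(S)$.

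I expect no real obstacle; the only care needed is in the greedy-structure claim for the LP, namely that an optimal basic solution has at most one fractional variable. This follows because the LP has one non-box constraint, or directly from the greedy exchange argument. It does not even require the assumption $w_i\le C$.
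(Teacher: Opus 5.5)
Your proof is correct and follows the paper's argument. In both, an optimal LP solution for $\F(x^*)$ has at most one fractional (split) item; rounding it down gives an integral packing of value at least $\F(x^*)-p^*$, and $\OPT\le\F(x^*)$ then finishes the bound. You also state explicitly that $x^*$ is budget-feasible, which the paper leaves implicit but is exactly what justifies $\OPT\le\F(x^*)$.
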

\begin{proof}
For any $x$, an optimal vertex of the LP $\F(x)$ has at most one fractional component~\cite{kellerer2004knapsack};
rounding it down gives a feasible integer packing with value $\ge\F(x)-\max_{i}(1-x_i)p_i$,
so $\FIP(x)\ge\OPT-\max_{i}(1-x_i)p_i$.
In particular, this holds for every optimal $x$.
\end{proof}
For typical instances, $p^*$ is likely to be very small because
the formulation favors interdiction sets that remove high-profit items.
To achieve the worst-case 2-approximation, every optimal interdiction must leave behind
an item which is packed fractionally and has
profit comparable to $\OPT$, which is unlikely to occur in practice.

\section[A polynomial time (2+epsilon)-approximation]{A polynomial time $(2+\epsilon)$-approximation}\label{sec:fptas}
In this section, we show that this 2-approximation algorithm can be improved to polynomial time for
any fixed $\epsilon>0$ while losing only $1+\epsilon$ in the approximation factor.
Although \cref{cor:2approx} reduces finding a 2-approximation to solving just $O(n)$ knapsack problems,
the interdiction setting prevents the direct application of
approximation algorithms for 0-1 knapsack, such as the standard FPTAS~\cite{lawler1979fast,kellerer2004knapsack}.

In this FPTAS, each item profit is rounded to a multiple of $\epsilon P/n$
where $P=\max_i p_i$; the total error is then at most $\epsilon P\le\epsilon\operatorname{OPT}$
because we assume each item can be packed individually.
In knapsack interdiction, the interdictor can remove the most profitable items,
so $\OPT$ may be much smaller than $\max_i p_i$
and rounding profits this way gives inadequate precision.
That is, we need the approximation error to be bounded relative to the optimal solution
to the interdiction problem.
To achieve this, we guess some $z\approx\OPT$,
round the profits to multiples of $\epsilon z/n$ and solve by DP for each $\alpha$;
the correct value of $z$ is found by binary search.

Two properties of this rounding scheme enable binary search to work here.
If $z$ is too small, the optimal solution is pruned from the DP,
but we can detect this situation.
This pruning also keeps the running time polynomial.
If $z$ is too large, the rounding is too coarse and the desired approximation ratio is not attained.
Binary search then determines the correct $z$.

\begin{theorem}\label{thm:kp-fptas}
For every $\epsilon>0$, $\OPT$ admits a $(1+\epsilon)$-approximation
in $O(n^3\,\epsilon^{-1}\log(\epsilon^{-1}\log\sum_i p_i))$ time (i.e., an FPTAS).
\end{theorem}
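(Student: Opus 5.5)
We prove \cref{thm:kp-fptas} with a guess-and-check routine wrapped in a binary search. For a guess $z>0$ and each candidate $\alpha\in\{0\}\cup\{p_i/w_i:i\in[n]\}$ (justified by \cref{lem:lagrangian}), we run a DP for the knapsack problem in the proof of \cref{cor:2approx}, phrased as minimizing $\F_\alpha(x)$ subject to $c^\top x\le B$.

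\textbf{The rounded DP.} Set $\delta=\epsilon z/n$ and round each reduced profit $p'_i=\max(0,p_i-\alpha w_i)$ up to $\bar p_i=\delta\lceil p'_i/\delta\rceil$. The DP processes items in order. Its state is the rounded value $k\delta$ of the \emph{non-interdicted} items so far, and it stores the minimum interdiction cost needed to reach that state. States with $k\delta>(1+\epsilon)z$, or equivalently $k>n(1+\epsilon)/\epsilon$, are discarded. This pruning is sound because if $\alpha C+\F_\alpha(x)\le z$, then the rounded leftover value is at most $\F_\alpha(x)+n\delta\le z+\epsilon z$. So each DP has $O(n/\epsilon)$ states per item and runs in $O(n^2/\epsilon)$ time. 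Over all $O(n)$ values of $\alpha$, one test of $z$ costs $O(n^3/\epsilon)$.

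\textbf{The test.} The test on $z$ returns the best $x$ found, minimizing $\alpha C+\sum_i(1-x_i)\bar p_i$ over surviving states with cost at most $B$. It accepts if this value is at most $(1+\epsilon)z$. Two properties follow.

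\emph{Completeness.} If $z\ge\OPT$, take the optimal $x^*,\alpha^*$. Its path survives the pruning, so the test accepts.

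\emph{Soundness.} Any accepted $x$ satisfies $\F(x)\le\alpha C+\F_\alpha(x)\le\alpha C+\sum(1-x_i)\bar p_i\le(1+\epsilon)z$, using \cref{lem:lagrangian} and $\bar p\ge p'$.

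Therefore, if $z^*$ is any accepted guess and $z^*/(1+\epsilon)$ is rejected, then $\OPT>z^*/(1+\epsilon)$. The returned solution then has value at most $(1+\epsilon)z^*<(1+\epsilon)^2\OPT$. Replacing $\epsilon$ by $\epsilon/3$ gives ratio $1+\epsilon$.

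\textbf{Search range.} If $\OPT=0$, this is detected directly: test whether some $\alpha$ and some $x$ leave only zero reduced profit and have $\alpha C=0$. This is an exact computation, taking $\alpha=0$ and interdicting all positive-profit items within budget. Otherwise $\OPT\ge$ the smallest positive value, and since profits are integers we may bound $1/\text{poly}\le\OPT\le\sum_i p_i$. Some care is needed here, because $\alpha C$ may be fractional. A cleaner route is to note that $\OPT$ is a vertex value of a rational LP. As a simpler alternative, argue $\OPT\ge\OPTIP\ge1$ when $\OPTIP>0$, since $\OPTIP$ is an integer. The case $\OPTIP=0$ is decidable exactly by interdicting all positive-profit items if the budget allows, in which case $\OPT=\F(x)=0$ for that $x$.

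\textbf{Binary search.} Search over the exponent $j$ in $z=(1+\epsilon)^j$ across $[1,\sum_i p_i]$. There are $O(\epsilon^{-1}\log\sum_i p_i)$ candidate exponents. Acceptance is monotone in $z$ in the sense needed, since completeness holds for all $z\ge\OPT$. A binary search therefore needs $O(\log(\epsilon^{-1}\log\sum_i p_i))$ tests. Note that monotonicity need not hold exactly below $\OPT$. It suffices to find an adjacent accepted/rejected pair, which binary search always finds given that the top endpoint is accepted and the bottom one, if accepted, is already good. Multiplying gives the stated $O(n^3\epsilon^{-1}\log(\epsilon^{-1}\log\sum_i p_i))$ time.

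\textbf{Main obstacle.} The delicate part is making the pruning threshold and rounding direction consistent. The rounding must be upward, so accepted answers are certified upper bounds on $\F$. The threshold must be large enough, $(1+\epsilon)z$, so the true optimum is never pruned when $z\ge\OPT$. I would first check this pair of inequalities carefully, since it is what makes binary search valid.
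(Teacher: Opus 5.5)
Your proposal is correct and follows essentially the same route as the paper. You guess $z$ on the grid $(1+\epsilon)^j$, round the reduced profits $\max(0,p_i-\alpha w_i)$ up to multiples of $\epsilon z/n$, run a profit-indexed DP pruned at $(1+\epsilon)z$ for each of the $n+1$ candidate values of $\alpha$, accept iff the rounded objective is at most $(1+\epsilon)z$, binary search, and absorb the resulting $(1+\epsilon)^2$ loss by rescaling $\epsilon$. The differences are cosmetic: you round each item rather than the running total, you justify binary search by the invariant ``lower end rejected, upper end accepted'' rather than the paper's global monotonicity argument, and you handle $\OPT=0$ and the endpoint $z=1$ explicitly via $\OPT\ge\OPTIP\ge1$, which the paper leaves implicit.
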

\begin{proof}
We first describe how to achieve the desired approximation ratio,
and then conclude by showing that it can be done within the desired running time.

The core of the algorithm rounds data relative to a guessed parameter $z\approx\OPT$.
We will see how the correct $z$ is determined later; for now just assume $z>0$ and let $\delta:=\epsilon z/n$.
Recall $\F_\alpha(x)=\sum_{i\in[n]}(1-x_i)\max(0,p_i-\alpha w_i)$.
For each interdiction $x$, define $\tilde\F_\alpha(x)$ by computing this sum
one term at a time in order of increasing $i$ and
rounding the running total up to a multiple of $\delta$ after each addition.
Also, define $\tilde g(\alpha)=\alpha C+\min\{\tilde\F_\alpha(x):x\in\{0,1\}^n,c^\top x\le B\}$.
There are at most $n$ rounding steps, each overestimates by at most $\delta$, and $n\delta=\epsilon z$,
so $\F_\alpha(x)\le\tilde\F_\alpha(x)\le\F_\alpha(x)+\epsilon z$.
Since this holds for every $x$,
$g(\alpha)\le\tilde g(\alpha)\le g(\alpha)+\epsilon z$.

Since $\OPT$ is not known in advance,
we search over $z=(1+\epsilon)^j$ for $j\in\{0,1,\dots,J\}$
where $J=\lceil\log_{1+\epsilon}\sum_ip_i\rceil$.
For a given candidate $z$, let
$\alpha^*\in\arg\min\{\tilde g(\alpha):\alpha\in\{0\}\cup\{p_i/w_i:i\in[n]\}\}$,
and accept $z$ if $\tilde g(\alpha^*)\le(1+\epsilon)z$.
We claim that this test is monotone.

Let $\hat x$ achieve $\OPT$ and let $\hat\alpha$ achieve
equality in \cref{lem:lagrangian}, so that $\F(\hat x)=g(\hat\alpha)=\OPT$.
When $z\ge\OPT$, the test passes:
$\tilde g(\alpha^*)\le\tilde g(\hat\alpha)\le g(\hat\alpha)+\epsilon z
=\OPT+\epsilon z\le(1+\epsilon)z$.
When $z<\OPT/(1+\epsilon)$, the test fails:
$\tilde g(\alpha^*)\ge g(\alpha^*)\ge\OPT>(1+\epsilon)z$.
There is at most one value of $z$ in the range $[\OPT/(1+\epsilon),\,\OPT)$
not covered by these two cases, so the pass/fail pattern is monotone,
and binary search finds the smallest accepted $z$ in $O(\log J)$ steps; call
this value $z^*$.
Then $z^*\le(1+\epsilon)\,\OPT$
because at least one $z\in[\OPT,(1+\epsilon)\OPT]$ is accepted.

Let $\tilde x$ attain $\min\{\tilde\F_{\alpha^*}(x):x\in\{0,1\}^n,c^\top x\le B\}$ at $z=z^*$.
Since $\F(x)\le\alpha C+\F_\alpha(x)$ for any $\alpha$ by \cref{lem:lagrangian},
\begin{align*}
\F(\tilde x) &\le\alpha^*\,C+\F_{\alpha^*}(\tilde x)\\
&\le\alpha^*\,C+\tilde\F_{\alpha^*}(\tilde x)\\
&=\tilde g(\alpha^*)\\
&\le(1+\epsilon)\,z^*\\
&\le(1+\epsilon)^2\,\OPT.
\end{align*}
Setting $\epsilon'=\sqrt{1+\epsilon}-1$ and using $\epsilon'$ in place of $\epsilon$
gives $\F(\tilde x)\le(1+\epsilon')^2\OPT=(1+\epsilon)\OPT$, a $(1+\epsilon)$-approximation for $\OPT$.

It remains to bound the running time.
The standard pseudopolynomial DP for knapsack discussed in \cref{cor:2approx}
has $O(nB)$ states, identified by a pair $(i,B')$ describing the item index and remaining budget, respectively.
To achieve polynomial time, we instead use a DP with states identified by the item index and accumulated rounded profit,
similar to the standard knapsack FPTAS~\cite{lawler1979fast}.
Specifically, $f_\alpha(i,\tilde P)$ stores the minimum budget needed to ensure
$\tilde\F_\alpha$ on items $i,\dots,n$ is at most $\tilde P$:
\[
f_\alpha(i,\tilde P)=
\begin{dcases}
0&\text{if }i>n\text{ and }\tilde P\ge0,\\
\infty&\text{if }i>n\text{ and }\tilde P<0,\\
\min\left\{\begin{aligned}
& c_i+f_\alpha(i+1,\,\tilde P),\\
& f_\alpha(i+1,\,\lfloor(\tilde P-\max(0,p_i-\alpha w_i))/\delta\rfloor\delta)
\end{aligned}\right\}&\text{if }i\le n.
\end{dcases}
\]
Here the floor
rounds the remaining target down to a multiple of $\delta$,
which corresponds to rounding up the accumulated profit in $\tilde\F_\alpha$.
So, $\tilde g(\alpha)=\alpha C+\min\{\tilde P:f_\alpha(1,\lfloor\tilde P/\delta\rfloor\delta)\le B\}$.
We claim that it suffices to consider only $\tilde P\le(1+\epsilon)z$ in this minimum.
This is equivalent to restricting the DP to consider only interdictions $x$ with $\tilde\F_\alpha(x)\le(1+\epsilon)z$.
Crucially, $\hat x$ satisfies this at $\alpha=\hat\alpha$ when $z\ge\OPT$:
by \cref{lem:lagrangian}, $\F_{\hat\alpha}(\hat x)\le\F(\hat x)=\OPT$,
so \[\tilde\F_{\hat\alpha}(\hat x)\le\F_{\hat\alpha}(\hat x)+\epsilon z\le\OPT+\epsilon z\le(1+\epsilon)z,\]
and hence $\hat x$ is considered in the DP at $\hat\alpha$.
Restricting to $\tilde P\le (1+\epsilon)z$ also ensures there are
at most $O(n/\epsilon)$ such values of $\tilde P$ considered, giving $O(n^2/\epsilon)$ total states in the DP.
For each $z$ we solve this DP at $n+1$ values of $\alpha$, costing $O(n^3/\epsilon)$,
and the binary search needs $O(\log J)=O(\log(\epsilon^{-1}\log\sum_ip_i))$ evaluations,
for an overall running time of $O(n^3\,\epsilon^{-1}\log(\epsilon^{-1}\log\sum_ip_i))$.
\end{proof}
\begin{corollary}
For every $\epsilon>0$, $\OPTIP$ admits a $(2+\epsilon)$-approximation
in $O(n^3\,\epsilon^{-1}\log(\epsilon^{-1}\log\sum_i p_i))$ time.
\end{corollary}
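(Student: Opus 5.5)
The plan is to combine \cref{thm:kp-fptas} with the sandwich $\FIP(x)\le\F(x)\le2\FIP(x)$ from \cref{sec:pseudo}. This sandwich holds for every $x\in\{0,1\}^n$ under the harmless preprocessing $w_i\le C$, and it gives $\OPTIP\le\OPT\le2\OPTIP$.

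First, run the algorithm of \cref{thm:kp-fptas} with the parameter $\epsilon/2$ in place of $\epsilon$. It returns a budget-feasible interdiction $\tilde x$, meaning $c^\top\tilde x\le B$, since the DP only accepts $x$ with $f_\alpha\le B$. This $\tilde x$ satisfies $\F(\tilde x)\le(1+\epsilon/2)\OPT$. Then chain the inequalities:
\[
\FIP(\tilde x)\le\F(\tilde x)\le(1+\epsilon/2)\OPT\le(1+\epsilon/2)\cdot2\OPTIP=(2+\epsilon)\OPTIP .
\]
By the solution concept adopted in the introduction, outputting $\tilde x$ is enough: the follower's optimal packing for $\tilde x$ has value $\FIP(\tilde x)$, which is within the stated factor.

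For the running time, the deletion of items with $w_i>C$ costs $O(n)$. It does not change $\OPTIP$, because such items can never be packed. Replacing $\epsilon$ by $\epsilon/2$ changes the bound $O(n^3\epsilon^{-1}\log(\epsilon^{-1}\log\sum_ip_i))$ only by constant factors. We also need the edge case $\OPT=0$, where the search over $z\ge1$ must still return a zero-value solution. Since the profits are integers, $\OPT=0$ means $\OPTIP=0$. In that case the smallest accepted $z=1$ gives $\tilde g(\alpha^*)\le(1+\epsilon)$ with $\delta=\epsilon/n$, so we get $\F(\tilde x)<1+\epsilon$. One could simply check, via the DP at the finest level, whether $\tilde g(\alpha^*)=0$.

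I expect the only real subtlety to be this degenerate case. The cleanest fix is to first test whether $\OPT=0$ directly: this happens iff the items with $p_i>0$ can all be interdicted within budget, which can be checked in $O(n)$. Otherwise $\OPT\ge1$, and the search over $z=(1+\epsilon)^j$ covers it.
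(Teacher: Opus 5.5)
Your proposal is correct and is essentially the paper's proof: run \cref{thm:kp-fptas} with parameter $\epsilon/2$ and chain $\FIP(\tilde x)\le\F(\tilde x)\le(1+\epsilon/2)\OPT\le2(1+\epsilon/2)\OPTIP=(2+\epsilon)\OPTIP$. Your separate treatment of $\OPT=0$ covers a case the paper's argument does not address explicitly, since its grid of $z$ values starts at $1$, and your $O(n)$ pre-check is a valid fix; it is not strictly needed, though, because the interdiction removing all positive-profit items has rounded sum exactly $0$ at $\alpha=0$, so $\tilde g(\alpha^*)=0$ and the returned $\tilde x$ already has $\F(\tilde x)=0$, not merely $\F(\tilde x)\le1+\epsilon$.
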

\begin{proof}
Applying \cref{thm:kp-fptas} with parameter $\epsilon/2$ yields $\tilde x$ with
$\F(\tilde x)\le(1+\epsilon/2)\OPT\le2(1+\epsilon/2)\OPTIP=(2+\epsilon)\OPTIP$.
\end{proof}

\section{Multidimensional knapsack interdiction}\label{sec:multi}

The approach extends to $t$ capacity constraints.
In this setting, each item $i$ has weight vector $w_i\in\Z_{\ge0}^t$
and there is a capacity vector $C\in\Z_{\ge0}^t$;
the interdiction costs $c_i$ and budget $B$ remain scalar.
Let $W\in\Z_{\ge0}^{t\times n}$ be the matrix whose $i$-th column is $w_i$,
and assume $w_{ij}\le C_j$ for all $i\in[n]$ and $j\in[t]$.
The definitions of $\FIP(x)$ and $\F(x)$ generalize to
\begin{align*}
\FIP(x)&=\max\{p^\top y:Wy\le C,\,y\le\mathbf{1}-x,\,y\in\{0,1\}^n\},\\
\F(x)&=\max\{p^\top y:Wy\le C,\,y\le\mathbf{1}-x,\,y\in[0,1]^n\}.
\end{align*}

\begin{corollary}\label{cor:multi}
For every $\epsilon>0$, a $(1+t+\epsilon)$-approximation for $\OPTIP$
with $t$ capacity constraints
can be computed in $O(n^{t+2}\,\epsilon^{-1}\log(\epsilon^{-1}\log\sum_i p_i))$ time.
\end{corollary}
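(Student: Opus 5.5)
We mirror the one-dimensional argument, replacing the scalar dual $\alpha$ by a vector $\alpha\in\R^t_{\ge0}$.

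\textbf{Step 1: integrality gap.} Fix $x$. An optimal vertex of the LP $\F(x)$ has at most $t$ fractional components, since there are only $t$ non-box constraints. The integral part is feasible. Each fractional item $i$ alone is feasible because $w_{ij}\le C_j$. Hence $\F(x)\le(1+t)\FIP(x)$, and so $\OPTIP\le\OPT\le(1+t)\OPTIP$.

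\textbf{Step 2: dualize.} By LP duality, exactly as in \cref{sec:pseudo},
\[\F(x)=\min_{\alpha\in\R^t_{\ge0}}\alpha^\top C+\F_\alpha(x),\qquad \F_\alpha(x)=\sum_{i}(1-x_i)\max(0,p_i-\alpha^\top w_i).\]
The analogue of \cref{lem:lagrangian} must restrict $\alpha$ to a finite candidate set $A$ of size $O(n^t)$ containing an optimal $\alpha$ for every $x$.

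\begin{itemize}
\item For fixed $x$, the function $\alpha\mapsto\alpha^\top C+\F_\alpha(x)$ is convex and piecewise linear.
\item Its pieces come from the arrangement of the hyperplanes $\{\alpha^\top w_i=p_i\}$ together with the coordinate hyperplanes $\{\alpha_j=0\}$.
\item If the function is bounded below on the orthant (it is, since it is $\ge0$), the minimum is attained at a vertex of this arrangement.
\item Every vertex is determined by $t$ linearly independent hyperplanes chosen from these $n+t$.
\end{itemize}

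So $A$ is the set of vertices, with $|A|\le\binom{n+t}{t}=O(n^t)$ for fixed $t$. Each vertex is computed by solving a $t\times t$ linear system, discarding any with negative coordinates.

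The step needing the most care is the existence of a minimizing vertex. The minimum might be attained only on an unbounded face, or the arrangement restricted to the orthant might have no vertex. The orthant's coordinate hyperplanes handle the second issue: the origin is always a vertex. For the first issue, a linear function that is bounded below on a pointed polyhedron attains its minimum at a vertex of that polyhedron, and each cell of the arrangement intersected with the orthant is pointed.

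\textbf{Step 3: reuse \cref{thm:kp-fptas}.} The proof of \cref{thm:kp-fptas} used only three facts: $\F(x)\le\alpha C+\F_\alpha(x)$ for all $\alpha$, equality at some candidate $\alpha$, and the knapsack-style DP over rounded $\tilde\F_\alpha$. All three carry over verbatim, with $\alpha C$ replaced by $\alpha^\top C$ and the iteration running over $A$ instead of $n+1$ values.

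\begin{itemize}
\item The rounding and the binary search over $z$ are unchanged.
\item The DP still has $O(n^2/\epsilon)$ states, since costs and the budget are scalar.
\end{itemize}

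The running time is $O(n^t)\cdot O(n^2/\epsilon)$ per value of $z$, times $O(\log(\epsilon^{-1}\log\sum_ip_i))$ search steps, giving $O(n^{t+2}\epsilon^{-1}\log(\epsilon^{-1}\log\sum_i p_i))$. This yields $\F(\tilde x)\le(1+\epsilon')\OPT$.

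\textbf{Step 4: combine.} With $\epsilon'=\epsilon/(1+t)$,
\[\FIP(\tilde x)\le\F(\tilde x)\le(1+\epsilon')(1+t)\OPTIP=(1+t+\epsilon)\OPTIP.\]
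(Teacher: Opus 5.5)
Your proof is correct and follows the paper's route: the $(1+t)$ integrality gap, LP dualization with $\beta$ eliminated, restricting $\alpha$ to the at most $\binom{n+t}{t}=O(n^t)$ vertices of the hyperplane arrangement, running the FPTAS of \cref{thm:kp-fptas} over these candidates, and replacing $\epsilon$ by $\epsilon/(1+t)$. The only differences are that you prove the integrality gap directly (an optimal vertex has at most $t$ fractional components, each of which fits alone) where the paper cites it, and you justify the vertex-attainment claim via linear functions on pointed cells where the paper simply asserts it by analogy with \cref{lem:lagrangian}.
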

\begin{proof}
The LP relaxation of $t$-dimensional 0-1 knapsack has integrality gap at most $1+t$
under the assumption $w_{ij}\le C_j$~\cite{kellerer2004knapsack},
so $\F(x)\le(1+t)\,\FIP(x)$ and hence $\OPT\le(1+t)\,\OPTIP$.

Next, we $(1+\epsilon)$-approximate $\OPT$.
Taking the LP dual of $\F(x)$ and eliminating the $\beta$ variables
as in the single-constraint case gives
\[\OPT=\min\{\alpha^\top C+\F_\alpha(x):
  c^\top x\le B,\,x\in\{0,1\}^n,\,\alpha\in\R_{\ge0}^t\},\]
where $\F_\alpha(x)=\sum_{i\in[n]}(1-x_i)\max(0,p_i-w_i^\top\alpha)$
and $\alpha\in\R_{\ge0}^t$ is a vector of dual variables, one per capacity constraint.
Analogous to \cref{lem:lagrangian}, the minimum over $\alpha\ge0$ is attained at some point where
$t$ of the $n+t$ hyperplanes $\{p_i=w_i^\top\alpha:i\in[n]\}\cup\{\alpha_j=0:j\in[t]\}$ intersect,
giving at most $\binom{n+t}{t}=O(n^t)$ candidates for $\alpha$.
For each such $\alpha$, $\min\{\F_\alpha(x):x\in\{0,1\}^n,\,c^\top x\le B\}$
reduces to 0-1 knapsack as in \cref{cor:2approx}, so the profit-rounding FPTAS of \cref{thm:kp-fptas} applies directly,
with $O(n^t)$ candidate values of $\alpha$ replacing the $n+1$ in the single-constraint case.
This yields a $(1+\epsilon)$-approximation for $\OPT$
in $O(n^{t+2}\,\epsilon^{-1}\log(\epsilon^{-1}\log\sum_i p_i))$ time.
Running this algorithm with $\epsilon/(1+t)$ in place of $\epsilon$ yields
\[\F(\tilde x)\le(1+\epsilon/(1+t))\OPT\le(1+t)(1+\epsilon/(1+t))\OPTIP=(1+t+\epsilon)\OPTIP.\qedhere\]
\end{proof}
As in the single-constraint case, the algorithm of Chen et al.~\cite{chen2022approximation}
achieves a superior approximation ratio of $t+\epsilon$ instead of our $1+t+\epsilon$,
but its $n^{\tilde O(t\epsilon^{-2-2t})}$ running time is impractical even for modest $t$ and $\epsilon$.
Since $\epsilon$ does not appear in the exponent of our running time,
it is polynomial in $n$ for every fixed $t$.

\paragraph*{Acknowledgements.}
I thank Ricardo Fukasawa for helpful comments which improved the presentation of the paper.

\paragraph*{Funding.}
I acknowledge the support of the Natural Sciences and Engineering Research Council of Canada (NSERC), [funding reference number CGSD-2023-578589]

\bibliographystyle{plainurl}
\bibliography{references}

\end{document}